\pgfplotsset{compat=newest}
\pgfplotsset{plot coordinates/math parser=false}
\newlength\figureheight
\newlength\matlabfigurewidth
\newtheorem{theorem}{Theorem}[section]
\theoremstyle{definition}
\theoremstyle{remark}
\newcommand{\argmin}[1]{\underset{#1}{\text{argmin}}\,}
\newcommand{\minimize}[1]{\underset{#1}{\text{min}}\;}
\newcommand\numberthis{\addtocounter{equation}{1}\tag{\theequation}}
\begin{document}
%
% paper title
% Titles are generally capitalized except for words such as a, an, and, as,
% at, but, by, for, in, nor, of, on, or, the, to and up, which are usually
% not capitalized unless they are the first or last word of the title.
% Linebreaks \\ can be used within to get better formatting as desired.
% Do not put math or special symbols in the title.
\title{A rational decentralized generalized Nash equilibrium seeking for energy markets\\
\thanks{The authors would like to thank Innosuisse - Swiss Innovation Agency (CH) and SCCER-FURIES - Swiss Competence Center for Energy Research - Future Swiss Electrical Infrastructure for their financial and technical support to the research work presented in this paper. This work has been sponsored by the Swiss Federal Office of Energy (Project
	nr. SI/501499)}}

%\IEEEpubid{ 978-1-5386-4505-5/18/$ \$ 31.00 \copyright $ 2018 IEEE}

% author names and affiliations
% use a multiple column layout for up to three different
% affiliations

\author{\IEEEauthorblockN{Lorenzo Nespoli}
\IEEEauthorblockA{\textit{EPFL}\\Lausanne, Switzerland\\
	\textit{ISAAC, SUPSI}\\ Campus Trevano, Canobbio, Switzerland\\
Email: lorenzo.nespoli@epfl.ch}
\and
\IEEEauthorblockN{Matteo Salani}
\IEEEauthorblockA{\textit{IDSIA, USI/SUPSI} \\ Galleria 2, Manno, Switzerland\\
	Email: matteo.salani@idsia.ch}
\and
\IEEEauthorblockN{Vasco Medici}
\IEEEauthorblockA{\textit{ISAAC, SUPSI}\\ Campus Trevano, Canobbio, Switzerland\\
	Email: vasco.medici@supsi.ch}}

% conference papers do not typically use \thanks and this command
% is locked out in conference mode. If really needed, such as for
% the acknowledgment of grants, issue a \IEEEoverridecommandlockouts
% after \documentclass

% for over three affiliations, or if they all won't fit within the width
% of the page, use this alternative format:
% 
%\author{\IEEEauthorblockN{Michael Shell\IEEEauthorrefmark{1},
%Homer Simpson\IEEEauthorrefmark{2},
%James Kirk\IEEEauthorrefmark{3}, 
%Montgomery Scott\IEEEauthorrefmark{3} and
%Eldon Tyrell\IEEEauthorrefmark{4}}
%\IEEEauthorblockA{\IEEEauthorrefmark{1}School of Electrical and Computer Engineering\\
%Georgia Institute of Technology,
%Atlanta, Georgia 30332--0250\\ Email: see http://www.michaelshell.org/contact.html}
%\IEEEauthorblockA{\IEEEauthorrefmark{2}Twentieth Century Fox, Springfield, USA\\
%Email: homer@thesimpsons.com}
%\IEEEauthorblockA{\IEEEauthorrefmark{3}Starfleet Academy, San Francisco, California 96678-2391\\
%Telephone: (800) 555--1212, Fax: (888) 555--1212}
%\IEEEauthorblockA{\IEEEauthorrefmark{4}Tyrell Inc., 123 Replicant Street, Los Angeles, California 90210--4321}}

% use for special paper notices
%\IEEEspecialpapernotice{(Invited Paper)}

% make the title area
\maketitle

% As a general rule, do not put math, special symbols or citations
% in the abstract
\begin{abstract}
We propose a method to design a decentralized energy market which guarantees individual rationality (IR) in expectation, in the presence of system-level grid constraints. We formulate the market as a welfare maximization problem subject to IR constraints, and we make use of Lagrangian duality to model the problem as a n-person non-cooperative game with a unique generalized Nash equilibrium (GNE). We provide a distributed algorithm which converges to the GNE. The convergence and properties of the algorithm are investigated by means of numerical simulations. 
\end{abstract}

% no keywords
% For peer review papers, you can put extra information on the cover
% page as needed:
% \ifCLASSOPTIONpeerreview
% \begin{center} \bfseries EDICS Category: 3-BBND \end{center}
% \fi
%
% For peerreview papers, this IEEEtran command inserts a page break and
% creates the second title. It will be ignored for other modes.
\IEEEpeerreviewmaketitle

%\IEEEpubidadjcol
\section{Introduction}
\subsection{Motivations}
%The ongoing shift from centralized to decentralized energy generation by means of distributed energy resources (DERs) in the electrical grid poses many challenges and opportunities for new business models. The high stochasticity and time-correlation of the energy produced by renewable DER calls for real-time markets, which can exploit prosumers' flexibility as an alternative to grid refurbishment. 
The scientific community agrees that in the future the intelligent activation of demand response (DR) will contribute to a reliable power system and price stability on power markets. 
Actuation of DR requires to solve an optimization problem in order to maximize an economic objective, which typically results in a welfare maximization problem (WMP), in which the unweighted sum of the economic costs of a group of agents is minimized. A very similar, and perhaps more studied problem, is the optimal power flow (OPF) problem. The OPF is usually solved in a centralized way by an independent system operator (ISO), in order to minimize the generation cost of a group of distributed power plants, over the set of underlying grid constraints.
When the number of generators increases, the problem could become computationally expensive. Furthermore, retrieving all the generator-specific parameters could become impractical for the ISOs. For these reasons, different decentralized formulations of the OPF exist \cite{Molzahn2017}, which can speed up the computation exploiting parallelization among the different units. Furthermore, solving the problem in a decentralized way allows the generators to keep most of their information and parameters private, increasing privacy and lowering cyber-security concerns. 
The main difference between the OPF and DR setting, is that the second one involves the participation of self-serving agents, which cannot be a-priori trusted by the ISOs. 
This implies that if an agent find it profitable (in terms of its own economic utility), he will compute a different optimization problem from the one provided by the ISO. For this reason, some aspects of DR formulations are better described through a game theoretic framework.  
%\IEEEpubidadjcol
\subsection{Background and previous work}

In this setting, we must consider that agents can adopt a strategy $s_i(\theta_i)$, which can be in general different from the one suggested by the ISO,  based on their private information (or type), denoted as $\theta_i$, and their belief about the strategy of the other prosumers.
The well-known Vickrey-Clarke-Groves (VCG) mechanism \cite{Makowski1987,Clarke1971,Vickrey1961} belongs to the strategy-proof class of mechanisms and presents other useful theoretical properties, among which being weakly budget-balanced. Anyway, to achieve this, it requires a value redistribution among agents under the form of monetary taxation, such that the tax which applies to agent $i$ is directly or indirectly independent from its actions. This implies that $N$ optimization problems must be solved, each of which is performed without considering a given agent. This makes the computational cost quadratic in $N$. Furthermore, 
VCGs are typically centralized and as such, they do not preserve the privacy of the agents. For example, in \cite{Poolla2017}, a VCG mechanism for virtual inertia is considered, in which bidders send their bidding curves to a center, which solves $N$ independent optimization problems. Since the VCG mechanism guarantees that the best bidding strategy is bidding truthfully, they send their true cost curves $c_i(x_i,\lambda_i)$ to the center. Note anyway that, if the agent's system presents some constraints, $c_i(x_i,\lambda_i)$ must represent them. 
%The cost can be augmented as:
%\begin{equation}
%c_{c,i}(x_i,\lambda_i) = 
%\begin{cases}
%c(x_i,\lambda_i), & \text{if } \quad x_i \ \in \ \mathcal{X}_i\\
%\infty,              & \text{otherwise}
%\end{cases}
%\end{equation}
This means that the center must know all the agent constraint sets $\mathcal{X}_i$ in order to solve the VCG. The unfavorable computational cost makes the VCG impractical for combinatorial auctions \cite{Conitzer2006} and problems with a large number of users with a nontrivial objective function. Despite this and other aspects which make it impractical in some cases \cite{Rothkopf2007}, VCGs have been extensively studied since they are the only general purpose incentive compatible mechanisms which maximize social welfare \cite{Tardos2007}.  
In order to preserve agent's privacy, it is possible to retrieve a distributed formulation of VCG using primal-dual decomposition algorithms. Note that distributing the mechanism aggravates the scalability problem of VCG, since the overall computation must now take into account communication delays. A second effect of adopting a decentralized formulation is that we cannot guarantee strategyproofness anymore. This is known as the cost of decentralization \cite{Petcu2008}, which leads to a weaker notion of incentive compatibility, namely ex-post Nash equilibrium (EPNE). Although weaker than a dominant-strategy equilibrium, ex-post Nash equilibrium does not require agents to model the strategies nor types of other agents through belief functions, as it's done using Bayes-Nash equilibrium \cite{Narahari2014}. 
%Formally, a mechanism $\mathcal{M}(g,\Sigma,S_m)$ with intended algorithm $S_m$ is said to be EPNE implementable (or faithful implementable) if
%\begin{equation}
%u_i(g(s_{m,i}(\theta_i),s_{m,-i}(\theta_{-i}))\geq u_i(g(s_{i}(\theta_i),s_{m,-i}(\theta_{-i}))
%\end{equation}
%for all agents, all strategies $s_i$ and all types $\theta_i, \theta_{-i}$.  
Following this concept, in \cite{Parkes2004} guidelines for distributed implementations of VCG mechanisms are derived. In \cite{Petcu2008} a distributed VCG mechanism which reuses part of the computation done in each subproblem is presented. More recently \cite{Tanaka2018} has proposed a distributed VCG implementation based on dual decomposition, and applied the concept of multistage mechanism, in which different mechanisms are applied at each primal dual update. Also in this case, the proposed algorithm scales quadratically with the number of agents. Another field of research, started with the seminal work of Rosen on n-person non-cooperative games \cite{Rosen1965}, adopt non-VCG mechanisms to reach EPNE \cite{Kim2013a,Gharesifard2016}. This involves allowing a loss in terms of efficiency \cite{Yang2010}, with the benefit of better scalability with respect to the number of agents.  

In this paper we propose a method to guarantee participation constraint, also known as individual rationality (IR): all the prosumers must have a positive return participating in the proposed energy market, with respect to the base case. We ensure IR allowing a coordinator to limit the Lagrangian multipliers associated to the coupling constraints. The rest of the paper is structured as follows: in \ref{s:prob} the specific problem we address is formulated and we show that its associate game mapping is monotone, which is a condition for the uniqueness of the VGNE; in \ref{s:algo} we propose a new algorithm for reaching the GNE, based on the alternating direction method of multipliers (ADMM); in \ref{s:ana} we compare the convergence of the aforementioned algorithm with a recently proposed \cite{Belgioioso2018} preconditioned forward backward (pFB) algorithm for distributed Nash equilibrium seeking.  

%The study of the behavior of selfish agents under different economic incentives and the distributed implementation of chosen mechanisms has generated two proficient areas of study, namely distributed algorithmic mechanism design (DAMD) \cite{Feigenbaum2007} and algorithmic game theory \cite{Tardos2007}.
%A part of DAMD theory focuses on indirect mechanisms which can be implemented in a distributed way, using cryptographic primitives to ensure fault tolerance against not only selfish agents, but also adversarial agents, willing to break down the network. This theory is known as \emph{ secure, multiparty function evaluation} (SMFE) or verifiable computing. Unfortunately, some standard SMFE techniques assume part of the agents to be obedient, while other's network complexity (number of messages required) grows quadratically or worse \cite{Feigenbaum2002}. Moreover,  the cryptographic setup costs and computational overhead of VC in state of the art systems make these methods impractical for most real-world applications \cite{Walfish2015} \cite{Teutsch2017}.  

\section{Problem formulation}\label{s:prob}

In this work we are interested in a more general problem with respect of the OPF. In particular, we consider the case in which a group of agents which produce and/or consume energy (prosumers now on) can sell their aggregated flexibility to third parties, for example to DSO through demand response programs or to balance responsible parties. The mathematical formulation of this problem is known as the sharing problem:

\begin{equation}\label{eq:sharing}
\begin{aligned}
\argmin{x \in \mathcal{X}}  & e(x) + \sum_i^{N} c_i(x_i)\\
s.t. & \quad A x \leq b
\end{aligned}
\end{equation}

where $\mathcal{X} = \prod_{i=1}^N \mathcal{X}_i$ is the Cartesian product of the prosumers feasible sets, $e(x)$ is a system level objective, $c_i(x_i)$ are the costs of each prosumers and the linear constraints are affine coupling constraints between the prosumers and $x = [x_1^T,..x_N^T] = [x_i]_{i=1}^N $ is the vector of the concatenated actions of all the prosumers. 
Here the affine coupling constraints encode grid constraints, limiting voltage and power in a subset of selected nodes of the grid in which the agents are located. This is possible taking into account the linearized formulation of the power flow equations \cite{Molzahn2017,Almasalma2017}, whose coefficients can be estimated using phasor measurement units \cite{Mugnier2016}, even using smart meter data \cite{Weckx2015}.
The advantage of considering coupling constraints instead of agent-level constraints on voltage and power is given by the fact that the first approach can reach better solutions in terms of total welfare. 
%This is trivial to show considering a symmetric system-level power constraint on the aggregate in the form $Sx \in \left[-\overline{P},\overline{P}\right]$, where with $S=\mathbb{1}^T_N\otimes I_T $ we denoted the time summation of all the prosumers actions. In order to guarantee that this constraint is never violated, we should replace it with the set of prosumers' constraints $x_i \in \left[-\overline{P}/N,\overline{P}/N\right]$.
%Formally
%\begin{lemma}\label{lemma1}
%	The n-dimensional hypercube intersecting each axis in $\{-\overline{P}/N,\overline{P}/N\}$  is always included in the convex hull defined by the points $\left[e_i(-\overline{P},\overline{P})\right]_{i=1}^N$, where $e_i$ is the versor of the ith dimension. 
%\end{lemma}
%\begin{proof}
%	The maximum distance of the center from the n-dimensional hypercube defined in \ref{lemma1}, equal to half of its diagonal, is $\overline{P}/\sqrt{N}$, which equal to the distance of the hyperplanes defined in \ref{lemma1} form the center, in each orthant. 
%\end{proof}
%This means that constraining the aggregate instead of the single users will enlarge the feasible set $\mathcal{X}$, always resulting in a (not strictly) better solution.

As anticipated in the introduction, we are interested in decomposing problem \eqref{eq:sharing} among the self interested prosumers, in such a way that the induced game presents only one variational GNE, and in the algorithms leading to such an equilibrium. Being the equilibrium unique, rational agents will converge to the EPGNE. This is equivalent to assume that the agents believe their own influence on the prices broadcasted by the sequence of mechanism proposed by the algorithm are negligible, i.e. they are price takers. 

A reasonable way to turn the centralized problem \eqref{eq:sharing} into a non-cooperative game, is to reward each prosumer with a part of the system level objective $e(x)$, based on the amount of energy he produces or consumes during a give period of time:
\begin{equation}\label{eq:utilities}
v(x_i,x_{-i}) = c_i(x_i) + \frac{\vert x_i\vert}{\sum_{i=1}^N \vert x_i \vert} e(x)
\end{equation}

Anyway, this would result in a non-linear and non-convex game. As a first approximation we can replace this repartition rule with fixed (during each horizon) coefficients, based on a moving average:
\begin{equation}
v(x_i,x_{-i}) = c_i(x_i) + \alpha_i e(x)
\end{equation}
where 
\begin{equation}
\alpha_i = \frac{\sum_{k=t-\tau}^t\vert x_{i,k}\vert}{\sum_{k=t-\tau}^t\sum_{i=1}^N \vert x_{i,k} \vert}
\end{equation}
Note that the game $\mathcal{G}(s_i(x),v_i(x))$ induced by the value functions in \eqref{eq:utilities} defines an aggregative game \cite{Jensen2010}, in which the each prosumer influence other's prosumers value only by means of the aggregated actions. 
The induced game can be described as the set of optimization problems \eqref{GNE} in which each prosumer minimizes its own value function $v(x_i, x_{-i})$ and associated KKT conditions \eqref{GNEKKT}.

\begin{equation}\label{GNE}
\begin{cases}
\minimize{x_i \in \mathcal{X}_i} v(x_i, x_{-i})  \\
s.t  \quad Ax\leq b 
\end{cases} \forall i \in N
\end{equation}

\begin{equation}\label{GNEKKT}
KKT(i) = 
\begin{cases}
0\in \partial_{x_i} v_i(x_i,\mathrm{x}_{-i}) + \mathrm{N}_{\mathcal{X}_i} + A_i^T\lambda_i   \\
0 \leq \lambda_i \perp -(Ax-b) \geq 0 
\end{cases} 
\end{equation}
where $A^T = \left[A_i^T\right]_{i=1}^N $ and $\mathrm{N}_{\mathcal{X}_i}$ is the normal cone operator.
Before introducing the algorithms that can be used to solve \ref{GNE}, we discuss some properties of the proposed objective function. It is known that a sufficient condition for the existence and uniqueness of a NE for a n-person non cooperative game is that the system-level objective function $\sigma(x) = \sum_{i=1}^N v_i(x_i)$ is diagonally strictly convex \cite{Rosen1965}. In the case of affine coupling constraints, authors in \cite{Belgioioso2017} and \cite{Paccagnan2016} have shown that the game has a unique variational GNE if the pseudogradient of $\sigma(x)$, $\mathcal{F}:\rm I\!R^{NT} \rightrightarrows \rm I\!R^{NT} = \left[ \partial_{x_i} v_i(x_i) \right]_{i} $ also known as game mapping, is strictly monotone. 
Furthermore, the equilibrium can be reached making the agents pay $A_i\lambda_i$, that is, the value function of each agent coincides with the integral of the first row of KKT in \eqref{GNEKKT}, $\tilde{v_i} = v_i(x_i,x_{-i}) + \lambda^T A_i x_i $. In this case, it has been shown that the agents reach a variational GNE with unique Lagrangian multiplier $\lambda$. 
Note that the game mapping differs from gradient of $\sigma(x)$ since its components are the partial derivatives of the values of the ith agent with respect to its own actions. We now show that the game map generated by the agents' values defined in \eqref{eq:utilities}  inherits monotonicity from the convexity of $e(x)$.

\begin{theorem}
	Let $e(x): \rm I\!R^{NT} \rightarrow \rm I\!R$ be a (strictly/strongly) convex function and let the costs of the agents $c_i(x_i): \rm I\!R^{T} \rightarrow \bar{\rm I\!R}$ be convex functions. Then any repartition $\left[\alpha_i\right]_{i=1}^N$ of $e(x)$ among the agents such that:
	\begin{flalign} \nonumber
		1) \quad &v_i(x_i,x_{-i}) = \alpha_i e(x) + c_i(x_i) &&\\ \nonumber
		2) \quad & \alpha_i \geq 0 \qquad \forall i \in \{N \} 
	\end{flalign} 
	generates a (strictly/strongly) monotone game map $\mathcal{F}: \rm I\!R^{NT} \rightrightarrows \rm I\!R^{NT}$
\end{theorem}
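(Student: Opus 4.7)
The plan is to split the game map into two contributions, prove monotonicity of each separately, and then use the fact that a sum of monotone operators is monotone, with strict (resp.\ strong) monotonicity of the sum inherited as soon as one summand is strictly (resp.\ strongly) monotone. Concretely, I would decompose
\begin{equation*}
\mathcal{F}(x) = \mathcal{F}_e(x) + \mathcal{F}_c(x), \qquad \mathcal{F}_e(x) = \left[\alpha_i\,\nabla_{x_i} e(x)\right]_{i=1}^N, \quad \mathcal{F}_c(x) = \left[\partial c_i(x_i)\right]_{i=1}^N.
\end{equation*}

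The private-cost part $\mathcal{F}_c$ is immediate: since each $c_i$ depends only on its own block $x_i$, the cross terms in the inner product vanish and
\begin{equation*}
(\mathcal{F}_c(x) - \mathcal{F}_c(y))^T (x - y) = \sum_{i=1}^N (\partial c_i(x_i) - \partial c_i(y_i))^T (x_i - y_i) \geq 0
\end{equation*}
by convexity of each $c_i$, with strictness or a quadratic lower bound inherited block by block as soon as any $c_i$ is strictly or strongly convex.

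The coupling part $\mathcal{F}_e$ is where the main obstacle lies. Writing $\mathcal{F}_e(x) = D_\alpha \nabla e(x)$ with $D_\alpha = \mathrm{blkdiag}(\alpha_i I_T) \succeq 0$, the inner product becomes
\begin{equation*}
(\mathcal{F}_e(x) - \mathcal{F}_e(y))^T (x - y) = (\nabla e(x) - \nabla e(y))^T D_\alpha (x - y),
\end{equation*}
and the key task is to show this weighted form is non-negative, even though convexity of $e$ only directly furnishes the unweighted inequality $(\nabla e(x) - \nabla e(y))^T(x-y) \geq 0$. I would attack this by exploiting the block structure of $D_\alpha$ together with partial convexity: for each $i$ and each fixed $x_{-i}$, joint convexity of $e$ implies that $e(\cdot, x_{-i})$ is convex in $x_i$, so $\nabla_{x_i} e$ is monotone in $x_i$ alone and each diagonal block $\alpha_i (\nabla_{x_i} e(x) - \nabla_{x_i} e(y))^T(x_i - y_i)$ contributes non-negatively. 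Controlling the off-block contributions---how $\nabla_{x_i} e$ responds to shifts in $x_{-i}$, scaled by $\alpha_i$---is the hard step; when $e$ is strictly or strongly convex I expect the diagonal-block estimates, sharpened by the curvature of $e$, to absorb these cross terms and yield the corresponding strict or strong monotonicity of $\mathcal{F}_e$.

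Once both $\mathcal{F}_e$ and $\mathcal{F}_c$ are shown to be monotone, monotonicity of $\mathcal{F}$ follows by adding the two inequalities, and strict or strong monotonicity transfers to $\mathcal{F}$ as soon as either summand carries that property. The bulk of the technical effort is therefore concentrated in the weighted-monotonicity estimate for $\mathcal{F}_e$, where the non-uniform weights $\alpha_i$ interact with the non-separable structure of $e$ in a way that convexity of $e$ alone does not automatically resolve.
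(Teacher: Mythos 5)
Your decomposition $\mathcal{F} = \mathcal{F}_e + \mathcal{F}_c$ and your treatment of the separable cost part are exactly what the paper does, and that half is fine. The problem is that your argument for monotonicity of $\mathcal{F}_e$ is not actually a proof: you correctly isolate the crux --- convexity of $e$ gives $(\nabla e(x)-\nabla e(y))^T(x-y)\geq 0$ but not the weighted inequality $(\nabla e(x)-\nabla e(y))^T D_\alpha (x-y)\geq 0$ --- and then leave it at ``I expect the diagonal-block estimates to absorb the cross terms.'' That expectation cannot be fulfilled from the stated hypotheses. Take $N=2$, $T=1$, $e(x_1,x_2)=\tfrac12(x_1+x_2)^2$, $c_i\equiv 0$, $\alpha_1=1$, $\alpha_2=0$ (both nonnegative). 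Then $\mathcal{F}_e(x)=(x_1+x_2,\,0)$, and with $x=(1,0)$, $y=(0,2)$ one gets $\langle \mathcal{F}_e(x)-\mathcal{F}_e(y),\,x-y\rangle = (-1)(1)+(0)(-2) = -1<0$, so the map is not monotone. Unequal weights genuinely break monotonicity when $e$ is non-separable (the same failure occurs with strictly positive weights such as $\alpha=(0.99,0.01)$), and extra curvature does not repair it: replacing $e$ by $e+\epsilon\Vert x\Vert^2$ for small $\epsilon>0$ gives a strongly convex counterexample.

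For what it is worth, the paper's own proof closes this step by asserting that, since a convex function is convex along any path, the inequality can be stated component-wise, $(x_i-y_i)(\partial_{x_i}e(x)-\partial_{x_i}e(y))\geq 0$ for all $i,x,y$. That inference is invalid: convexity along lines yields only that the \emph{sum} over $i$ of these terms is nonnegative, not each term individually, and the same example refutes the component-wise claim ($i=1$, $x=(1,0)$, $y=(0,2)$ gives $(1-0)(1-2)=-1$). So the step you flagged as ``the hard step'' is a real gap, and it is present, papered over, in the published argument as well. A correct version of the theorem needs additional structure: all $\alpha_i$ equal (so that $D_\alpha$ is a positive multiple of the identity), or $e$ separable across agents, or a direct assumption that $D_\alpha\nabla e$ is monotone. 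Your instinct to distrust the weighted inequality was the right one; the proposal falls short only in stopping before either proving or disproving it.
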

\begin{proof}
$\mathcal{F} = \left[ \partial_{x_i} v_i(x_i,x_{-i})\right]_{i=1}^N$ can be seen as a sum of two operators: $\mathcal{E} =\left[ \partial_{x_i} \alpha_i e(x_i,x_{-i})\right]_{i=1}^N $ and $\mathcal{C} = \left[ \partial_{x_i} c_i(x_i)\right]_{i=1}^N$. Due to the separability of $\mathcal{C}$, it coincides with the gradient of $\sigma(x) = \sum_{i=1}^N c_i(x_i)$. Due to the convexity of $\sigma(x)$, $\mathcal{C}$ is a monotone map, since the gradient of a convex function is monotone (theorem 1 in \cite{Minty1963}).
Using the same reasoning, $\nabla_x e(x)$ is a monotone map due to the convexity of $e(x)$. From the definition of monotonicity, $\left<x-y \vert \nabla_x e(x)-\nabla_y e(y)\right> \geq 0 \quad \forall \quad (x,y)$. Additionally, since any convex function must be convex along any path, we can state it component-wise: 
$(x_i-y_i)(\partial_{x_i}e(x) -\partial_{y_i}e(y))\geq 0 \quad \forall  \quad (i \in \{N\},x,y) $. 
Since we defined all $\alpha_i$ as positive, $(x_i-y_i)\alpha_i(\partial_{x_i}e(x) -\partial_{y_i}e(y))\geq 0 \quad \forall  \quad (i \in \{N\},x,y) $. Thus $\left<x-y \vert \mathcal{E}(x)-\mathcal{E}(y)\right> \geq 0  \quad \forall  \quad (x,y)$, and $\mathcal{F}$ is monotone being the sum of two monotone operators.  
\end{proof}
%- Incentive compatibility FOC (potential game -> show this is not)
%- Ex post Nash - Oligopolies (examples)-> countermeasures: reputation systems
%\subsection{Distributed Solution concepts}
%
%\paragraph{Efficient algorithms}
%
%
%\paragraph{Incentive compatibility}
%-Strategyproof and truthful implementation 
%-Example: potential games
%
%\paragraph{Equilibrium problems}
%-The DSO takes care of the grid
%
%\paragraph{Nash Equilibrium}
%-Ex post NE as a consequence of decentralization (Nisan/M-DPOPT)
%-Sequential incentive compatibility
%-Oligopolies and helpful agents
%
%\paragraph{Other possible concepts}
%-Ascending auctions--> you against the others, cartels
%-A priori computation of welfare loss (envelope theorem) / KKT feasibility check-> not possible without knowing X_i
%\subsection{Proposed solution}
%Sharing problem decomposition as opposed to power tariffs.
%No n^2 numerical complexity compared with VCG. Incentive compatibility is lost.
%Make use of well known algorithms (ADMM)
%shapeley value?

\section{Algorithms for GNE seeking}\label{s:algo}
As demonstrated in \cite{Paccagnan2016}, asysmmetric projection algorithms \cite{Facchinei2015} can be used to reach a GNE of an aggregative game with quadratic utilities.  Recently, the same algorithm has been rigorously derived modeling the GNE as a monotone inclusion \cite{Belgioioso2018}, showing that it coincides with a preconditioned forward backward (pFB) method (algorithm \eqref{alg:1}), which is a special case of the Banach-Picard iteration \cite{Bauschke2011} of two operators whose sum is the set value mapping associated to the KKT conditions in \eqref{GNEKKT}.
\begin{algorithm}
	\caption{pFB}\label{alg:1}
	\begin{algorithmic}
		\State $x^{k+1} = \boldsymbol{\Pi}_{\mathcal{X}} \left[x^{k} -\alpha (\mathcal{F}(x^{k})+A^t\lambda^k)) \right]$ 
		\State $\lambda^{k+1} = \boldsymbol{\Pi}_{\rm I\!R^+} \left[\lambda^{k} +\beta (2Ax^{k+1}-Ax^{k}-b) \right]$ 
	\end{algorithmic}
\end{algorithm}

We compare algorithm \eqref{alg:1} with a trivial modification of the ADMM algorithm \cite{Boyd2010}, which convergence rate and properties have been extensively studied in the literature.
For clarity of exposition, we start considering the version of problem \eqref{eq:sharing} without coupling constraints. This can be solved in a centralized way through ADMM, applying the procedure in \cite{Boyd2010}  \S 7.3, which results in the following parallelized formulation:

\begin{algorithm}
	\caption{ADMM}\label{alg:2}
	\begin{algorithmic}
    \State  
      \begin{align*}
    	 x_i^{k+1} &= \argmin{x_i \in \mathcal{X}_i} c_i(x_i) +\frac{\alpha_i}{2\rho} \Vert (S x^k -y^k)/N  \\
    	&-x_i^k +x_i +\lambda^k \Vert_2^2\\
    	& +\frac{1}{2\rho} \Vert (A x^k -y^k)/N -A_ix_i^k +x_i +\lambda_a^k \Vert_2^2 \label{eq:agent_min}\numberthis \\
    	y^{k+1} &= \argmin{y} e(y) + \frac{1}{2\rho} \Vert y -S x^{k+1}  -\lambda^k \Vert \numberthis \label{eq:center_min} \\
    	\lambda^{k+1} &= \lambda^{k} +S x^{k+1} -y^{k+1}\label{eq:lambda_center} \numberthis \\
    	y_a^{k+1} &=  \argmin{y} \mathcal{I}_{\mathcal{X}_a} + \frac{1}{2\rho} \Vert y_a -Ax^{k+1}  -\lambda^k \Vert \numberthis \label{eq:y_a_min}  \\
    	\lambda_a^{k+1} &= \lambda_a^{k} +A x^{k+1} -y_a^{k+1}\label{eq:lambda_a} \numberthis 
    \end{align*}
	\end{algorithmic}
\end{algorithm}

%\begin{align*}
%x_i^{k+1} &= \argmin{x_i \in \mathcal{X}_i} c_i(x_i) +\frac{\alpha_i}{2\rho} \Vert (S x^k -y^k)/N\\
%&-x_i^k +x_i +\lambda^k \Vert_2^2\\
%& +\frac{1}{2\rho} \Vert (A x^k -y^k)/N -A_ix_i^k +x_i +\lambda_a^k \Vert_2^2 \label{eq:agent_min}\numberthis \\
%y^{k+1} &= \argmin{y} e(y) + \frac{1}{2\rho} \Vert y -S x^{k+1}  -\lambda^k \Vert \numberthis \label{eq:center_min} \\
%\lambda^{k+1} &= \lambda^{k} +S x^{k+1} -y^{k+1}\label{eq:lambda_center} \numberthis \\
%y_a^{k+1} &=  \argmin{y} \mathcal{I}_{\mathcal{X}_a} + \frac{1}{2\rho} \Vert y_a -Ax^{k+1}  -\lambda^k \Vert \numberthis \label{eq:y_a_min}  \\
%\lambda_a^{k+1} &= \lambda_a^{k} +A x^{k+1} -y_a^{k+1}\label{eq:lambda_a} \numberthis 
%\end{align*}

where the only difference form the centralized algorithm is the $\alpha_i$ coefficient in the $x_i$ update.
We can write the KKT conditions at convergence

\begin{subnumcases}
\partial \partial_{x_i} c_i(x_i^*) + \alpha_i \frac{\lambda^*}{\rho} +A_i^T\frac{\lambda_a^*}{\rho} +\mathrm{N}_{\mathcal{X}_i} = 0 \qquad \forall i \in N &\label{KKT1} \\ 
\partial_{y}  e(y^*) -\frac{\lambda^*}{\rho} = 0& \label{KKT2} \\ 
y^* = Sx^*& \label{KKT3}\\
0 \leq \lambda_a^* \perp -(Ax-b) \geq 0 \label{KKT4} 
\end{subnumcases} 
We can find $\lambda^*$ from \ref{KKT2} and substitute it in \ref{KKT1}:
\begin{equation}
\partial_{x_i} c_i(x_i^*) + \alpha_i \partial_{y}  e(y^*)  +A_i^T\frac{\lambda_a^*}{\rho} + \mathrm{N}_{\mathcal{X}_i}= 0
\end{equation}
then we can use \ref{KKT3}, and recalling that $S$ is the summation matrix, we obtain:
\begin{equation}
\partial_{x_i} c_i(x_i^*) + \alpha_i \partial_{x_i}  e(x^*) +A_i^T\frac{\lambda_a^*}{\rho} + \mathrm{N}_{\mathcal{X}_i} = 0
\end{equation}
which, together with \ref{KKT4} are equivalent to the KKT \ref{GNEKKT} of the game \ref{GNE}, when $v_i = c_i(x_i) + \alpha_i e(x)$.

\subsection{Pricing and individual rationality} 
In this paper we only consider the case in which the function $e(x)$ is the surplus that the agent community has in paying the energy at the point of common coupling with the electrical grid:
\begin{equation}\label{eq:surplus}
e(x) = c\left(\sum_{i=1}^{N} x_i\right)-\sum_{i=1}^{N} c(x_i) 
\end{equation}
where $x_i \in \rm I\!R^{T}$ is the vector of total power of the ith agent, $c(\cdot)$ is the energy cost function defined as:
\begin{equation}\label{eq:cost_fun}
c(z_t) = 
\begin{cases}
p_{b,t} z_t , & \text{if } \quad z_t \geq 0  \\
p_{s,t} z_t ,              & \text{otherwise}
\end{cases} 
\end{equation}
where $p_{b,t}$ and $p_{s,t}$ are the buying and selling tariffs, respectively, at time $t$. In order to induce agents to follow the proposed mechanism, we must ensure that the energy tariff they pay participating in the market is always lower than the one they pay in the base case. This is always true when we are not taking into account grid constraints, since $e(x)$ as defined in \eqref{eq:surplus} is always non-negative, when $p_{b,t} \geq p_{s,t}$, as usual in energy tariffs. However, if the agents are located in a grid with big voltage oscillations, the Lagrangian dual variables (which we can interpret as punishment prices) could be such that the cost paid by the agents is higher than $\alpha_i e(x)$. To ensure IR, we encode it in the optimization scheme. At each iteration, for each time step in the horizon, we increment the Lagrangians only if the following condition holds:
\begin{equation} \label{the_condition}
\alpha_i e(x^k_t) + A_i^T\lambda^k_t \leq 0 \qquad \forall i \in N, \quad \forall t \in T
\end{equation}
where a negative value means that the prosumer is gaining a reward. This obviously results in the impossibility to satisfy the coupling constraints. We can give the following straightforward economic interpretation to this mechanism: each agent would opt-out from the game as soon as the energy tariffs become unfavorable with respect to the existing one. Condition \eqref{the_condition} prevent this from happening. In the presence of bad power quality, the DSO could provide favorable energy tariffs to prosumers participating in the mechanism, ensuring that condition \eqref{the_condition} is met with high probability.

\subsection{Prosumers problem formulation}\label{problem_form} 
In this paper, each prosumer's flexibility is modeled using an electric battery. 
%This is a simplifying assumption which is done in order to focus on the study of the mechanism. In fact, prosumers' flexibility can be provided by heat pumps or electric boilers, which can be trickier to model. 
Although simple, the model we used is not simplistic and we briefly describe it in this subsection.   Since the effect of charging or discharging on the state of charge is not symmetric due to the efficiencies, the problem is usually formulated as a mixed integer linear program (MILP), introducing binary decision variables and using bilinear constraints, to avoid the simultaneous charge and discharge of the battery. Furthermore, the objective function of the agents is non differentiable at $P_m = 0$ and is mathematically described by the maximum operator. In order to speed up the computations, we reformulated all the control problems as a quadratic optimization. 

We start considering that both the ADMM and the pFB formulations can be described by the following optimization problem:
\begin{equation}\label{eq:prox_prob}
\argmin{x_i \in \mathcal{X}_i}  \ f(x_i,x_{-i})  +\frac{1}{2\rho}\Vert D x_i -r^k \Vert_2^2 
\end{equation}

where $r$ is a reference signal and $D \in \rm I\!R^{T \times 2T} = I_{T} \otimes [1,-1] $, performs the sum of the charging and discharging operations with appropriate signs. Here, with abuse of notation, we redefined the vector $x_i\in \rm I\!R^{2T}$ as the vector containing both the charging and discharging operators, in such a way that $x_i = [P_{in,t};P_{out,t}]_{t=1}^T$, where $P_{in,t}$ and $P_{out,t}$ are the charging and discharging powers of the battery. 
For the ADMM formulation, it is easy to see that problem \eqref{eq:prox_prob} can be used to solve \eqref{eq:agent_min}. We can still use  \eqref{eq:prox_prob} for solving the pFB formulation recalling that the projected gradient descent is equivalent to a quadratic optimization problem in the form:
\begin{equation}\label{eq:asy_prob}
\argmin{x_i \in \mathcal{X}_i}  \ \left(\mathcal{F}_i(x_i^{k})+A_i^t\lambda^k\right)^T x_i  +\frac{1}{2\rho}\Vert x_i -x_i^k \Vert_2^2 
\end{equation}

The battery is modeled as a discrete linear system, with the state of charge denoted by $s$:
\begin{equation}
s_{i,t+1} = A_{s,i} s_{i,t} + B_{s,i} x_{i,t}
\end{equation} 
We can eliminate the dependence on the state of the optimization problem, using the standard batch formulation:
\begin{equation}
s_i = \Lambda s_{i,0} + \Gamma x_i
\end{equation}
where $x_i \in \rm I\!R^{2T \times 1}$ is the control vector for the whole time horizon $T$ and $\Lambda \in \rm I\!R^{T \times 1}, \Gamma \in \rm I\!R^{T \times 2T}$ are the batch matrices.   

We can now describe the set $\mathcal{X}_i$ through the linear constraints $A_{c_i} x_i \leq b_{c_i}$, defined as:  
\begin{equation}
A_{c_i} = \begin{bmatrix} I \\ I \\  -\Gamma \\ \Gamma \end{bmatrix} 
b_{c_i} =\begin{bmatrix} x_{min} \\ x_{max} \\  -e_{min} + \Lambda e_0 \\ e_{max} -\Lambda e_0 \end{bmatrix}
\end{equation}
where $x_{min}, x_{max}, e_{min}, e_{max}$ $ \in \rm I\!R^{2T} $ are the power and energy box constraints, while $I$ is the identity matrix of appropriate dimensions.
Now we can reformulate the non differentiable cost function \eqref{eq:cost_fun} with a linear function, such as we can reuse it in both the ADMM and pFB formulations. We start considering that the if condition of the cost function in \eqref{eq:cost_fun} can be equivalently formulated using the max operator. In turn, the max operator can be replaced by the sum of an auxiliary variable $y$ and appropriate inequality constraints. We augment our decision variable such as $\tilde{x} = [x^T,y^T]^T$. Now the minimization of \eqref{eq:cost_fun} is equal to the following optimization problem:
\begin{equation}\label{eq:selfish_problem_2}
\begin{aligned}
\min_{\tilde{x}} & \ l^T \tilde{x} \\
s.t.:\ &  \tilde{A} \tilde{x}  \leq \tilde{b}\\
\end{aligned}
\end{equation}

where $\tilde{A} = [A_{c,i};A_y]$ and $\tilde{b} = [b_{c,i},b_y]$, and 
\begin{equation}\label{eq:aug_matrices}
A_y = \begin{bmatrix} D \circ P_b - I_T \\ D \circ P_s  -I_T\end{bmatrix} \quad
b_y =\begin{bmatrix} -p_b P_m \\ -p_s P_m  \end{bmatrix}
\end{equation}

where the  $P_b, P_s \in \rm I\!R^{T \times 2T} \ $  $t_{th}$ rows entries are identical to the buying and selling prices at time $t$.
The effect of the matrices in \eqref{eq:aug_matrices} is that the new auxiliary variable $y$ is now an upper envelope for the cost function \eqref{eq:cost_fun}. Since we require the cost to be minimized, $y$ will coincide with the cost function $c(\cdot)$ at optimality. We can now use $A_y$ and $b_y$ in both the ADMM and pFB formulations. While in the first $l^T x_i$ replaces $f(x_i,x_{-i})$ in \eqref{eq:prox_prob}, in the latter the prosumers' energy costs are considered as part of the pseudogradient: $F_i = l + \partial_{x_i} e(x)$.

This problem formulation prevent us from introducing binary variables for the charging and discharging powers, since optimal solutions of \eqref{eq:selfish_problem_2} does not require to simultaneously charge and discharge the battery at the same time. 
This is not true for the ADMM formulation, in which the quadratic penalty on the sum of $P_{in,t}$ and $P_{out,t}$ with respect to a reference signal $r^k$ is present. In the case in which the reference signal $r^k$ is negative, the battery is not only incentivized to charge itself, but to consume as much energy as possible. This will result in a simultaneous charge and discharge, due to the round-trip efficiency. To avoid this behavior, $f(\cdot)$ can be augmented with a linear therm, punishing the battery discharging operation when $r$ is negative: $\tilde{f}(\tilde{x}_i) = f(\tilde{x}_i) + l_p^T \tilde{x}_i $ where $l_p  \in \rm I\!R^{1 \times 3*T}$ has non zero entries, all identical to a punishment therms, only when $r<0$.

%We start showing the similarities between the centralized sharing problem \eqref{eq:sharing} and game \eqref{GNE}. We follow the approach of Rosen \cite{Rosen1965} and Pavel \cite{Pavel2006}, who formulated the non cooperative problem through a bifunction, which can be thought as the sum of each agent's objective function, when the action of all the other are considered fixed
%\begin{equation}
%\tilde{J}(x,u) = \sum_{i=1}^{N} J_i(x_i,u_{-i})
%\end{equation}  
%The function $\tilde{J}(x,u)$ is now clearly separable in its first argument, since $\nabla{x_i}\tilde{J}(x,u) = J_i(x_i, u_{-i})  $. This mean that the gradient of $\tilde{J}(x,u)$ with respect to $x$ coincide with the pseudogradient of the game \eqref{GNE}. As a result, any fixed point solution of $\delta_{x_i}\tilde{J}(x*,x*)$ coincide with a GNE of \eqref{GNE}. Anyway, the ADMM based formulation reaches a GNE of the game described in \eqref{GNE} when the utility functions $J_i$ coincide with the one generated by the  

\section{Numerical analysis}\label{s:ana}
We test both the ADMM and the pFB and compare the performance to a centralized solution. The only difference from the ADMM and the centralized formulation are the $\alpha_i$ coefficients in equation \eqref{eq:agent_min}, which are not present in the centralized solution. Since the system-level objective function $e(x)$, as defined in \eqref{eq:surplus}, is not differentiable in 0 and is not strictly nor strongly convex, the convergence of pFB is not guaranteed. To have an equal comparison, we replaced the system-level cost function \eqref{eq:cost_fun} with a continuously differentiable function. We define it by means of its derivative:
\begin{equation}
\nabla_z \tilde{c}(z) = (p_{b,t}-p_{s,t})\frac{ \tanh(k Sx_{t})+1}{2} + p_{s,t}
\end{equation}
where $k$ regulates the steepness of the function in $z=0$. 
\begin{figure}[h]
	\centering
	\includegraphics[width=1\linewidth]{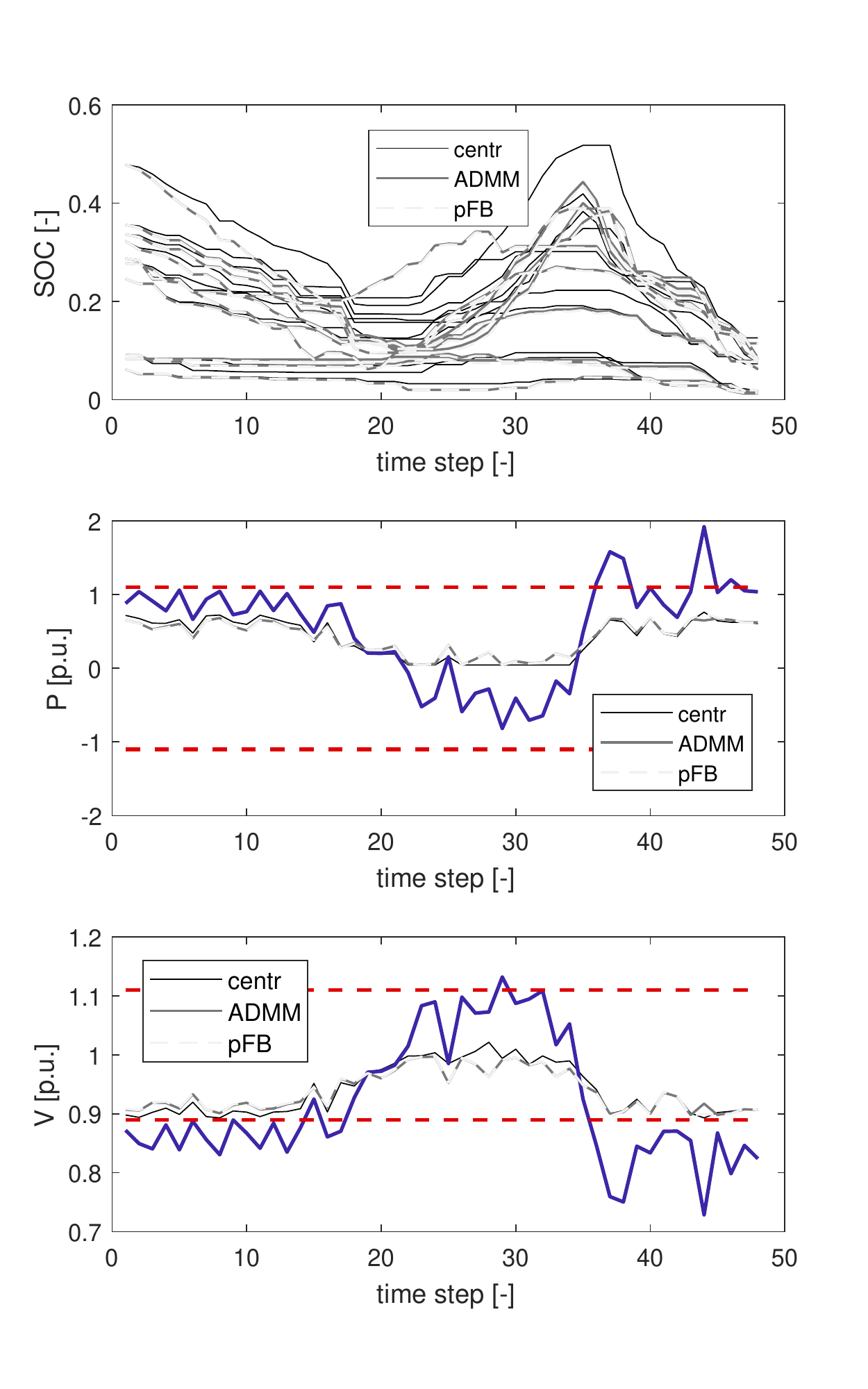}
	\caption{Time series example, N = 10. Blue: forecasted profiles. Red: constraints. Grays: solutions of the centralized and decentralized approaches. Top: state of charge for each battery. Middle: power profiles. Bottom: voltage profiles.}
	\label{fig:timeseries}
\end{figure}
In our simulations $k=10$, which provides a reasonable steepness for all the possible values of the power aggregate, since we did all the computations in per units, and the aggregate power constraint is $Sx \in \left[-1.1, 1.1\right]$. We stress out that this approximation is only used for the system-level objective, and not for the prosumers objective functions, where the cost \eqref{eq:cost_fun} is modeled as described in subsection \ref{problem_form}. In order to fairly compare the algorithms, we used an equal stepsize $\rho$, fixed to 0.1.
The power profiles of each prosumer are randomly chosen from a yearly dataset of real residential electrical consumption. Each prosumer is equipped with a PV field, with a nominal power uniformly distributed between 2 and 10 times its daily energy consumption. Furthermore, each prosumer is  provided with an electric battery with size equal to the expected daily energy exceeding its consumption. Figure \ref{fig:timeseries} shows the optimized time series from a single case. In the upper panel, the batteries' state of charge (SOC) are shown. Since the SOC is the time integral of the optimization variables ($P_{in}, P_{out}$), it is clear that the ADMM and the pFB converged exactly to the same solution. The middle panels shows the forecasted aggregated power profile and the optimized one. Note that both the ADMM and pFB solutions are not far from the centralized solution, while differences are more evident in terms of single prosumers SOC. The last panel shows voltage profiles at the point of common coupling. In figure \ref{fig:convergence} the convergence of the two algorithms is shown, in terms of game objective function $\sigma(x)$. We ran a total of 50 simulations, each of which includes 10 prosumers with power profiles and battery sizes randomly chosen, as explained before. For each simulation $s$, we retrieve the best optimal value of $\sigma(x)$, $p_{best}^s$, defined as:
\begin{equation}
p_{best}^s = \text{minimum} \ \{p^{*s}_{ADMM},p^{*s}_{pFB}\} 
\end{equation} 
where $p^{*s}_{ADMM},p^{*s}_{pFB}$ are the solution of the two algorithms after 200 iterations (after which the relative change in $\sigma(x)$ for all the simulations was smaller than $1e-5$). The thick lines show the median, while the shadowed patches contain half of the simulations.

\begin{figure}[h]
	\centering
	\includegraphics[width=1\linewidth]{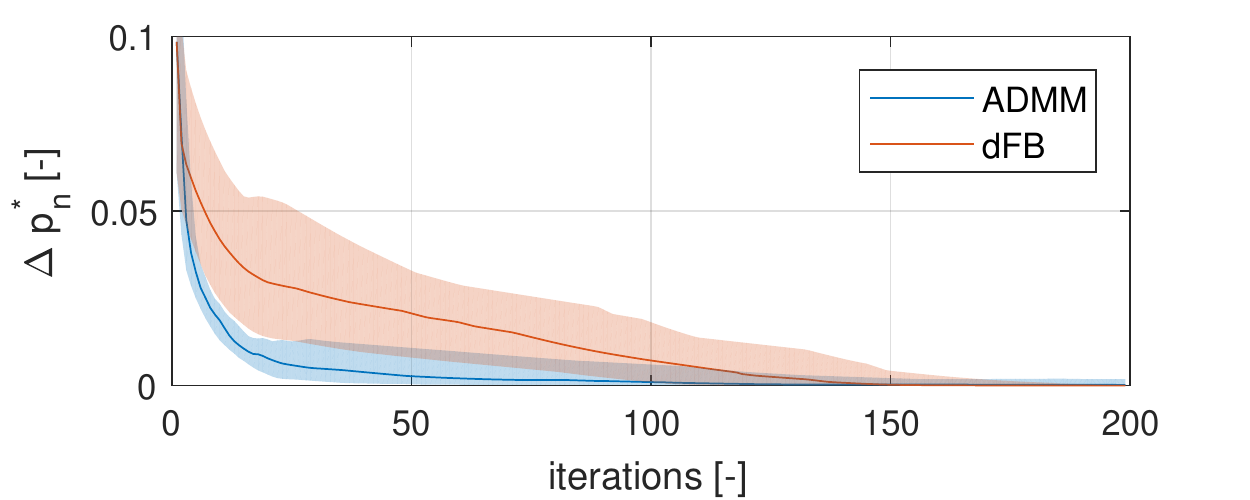}
	\caption{Normalized optimal value $p^*$. The thick lines denote the median, while the shaded areas are the $25\%$ and $75\%$ quantiles. }
	\label{fig:convergence}
\end{figure}

\section{Conclusions}
We have proposed a method to enforce IR while reaching a EPGNE in a distributed way. The method and the related algorithm have been tested, and compared with pFB, a state of the art algorithm for GNE seeking. The simulations shows that the proposed algorithm reaches the same solutions of pFB, while showing faster convergence in most of the cases. In future research, we will extensively investigate the advantages of the proposed methodology.

\bibliographystyle{IEEEtran}

% that's all folks
\end{document}